\def\BibTeX{{\rm B\kern-.05em{\sc i\kern-.025em b}\kern-.08em
    T\kern-.1667em\lower.7ex\hbox{E}\kern-.125emX}}
\theoremstyle{plain}
\newtheorem{theorem}{Theorem} 
\newtheorem{lemma}{Lemma}
\theoremstyle{definition}
\newtheorem{definition}{Definition}
\newtheorem{example}{Example}[section]
\newtheorem{assumption}{Assumption}[section]
\theoremstyle{remark}
\newtheorem{remark}{Remark}[section]
\begin{document}
\title{
Spectral Flow Learning Theory: 
Finite-Sample Guarantees for System Identification 
}

\author{Chi Ho Leung and Philip E. Par\'{e}*
    \thanks{*Chi Ho Leung and Philip E. Par\'e are with the Elmore Family School of Electrical and Computer Engineering, Purdue University, USA.
    E-mail: leung61@purdue.edu, philpare@purdue.edu. 
    This material is based upon work supported in part by the National Science Foundation (NSF-ECCS \#2238388).
    }
}

\maketitle

\begin{abstract}
We study the identification of continuous-time vector fields from irregularly sampled trajectories. 
We introduce spectral flow learning, which learns in a windowed flow space using a lag-linear label operator that aggregates lagged Koopman actions.
We provide finite-sample, high-probability (FS-HP) guarantees for the class of variable-step linear multistep methods (vLMM).
The FS-HP rates are constructed using spectral regularization with qualification-controlled filters for flow predictors under standard source and filter assumptions.
A multistep observability inequality links flow error to vector-field error and yields two-term bounds that combine a statistical rate with an explicit discretization bias from vLMM theory.
Simulations on a controlled mass-spring system corroborate the theory and clarify conditioning, step-sample tradeoffs, and practical implications.
\end{abstract}

\begin{IEEEkeywords}
Learning, Identification for control, 	Nonlinear systems identification, Machine Learning, Identification, Statistical Learning
\end{IEEEkeywords}

\section{Introduction}

Classical continuous-time identification lacks operator-theoretic flow-field bounds \cite{garnier2008direct}, which motivates the proposed \emph{spectral flow learning} (SFL) framework that offers FS-HP guarantees 
for identification pipelines for linear and nonlinear systems. 
SFL offers a spectral-regularization view of learning dynamical laws from irregular trajectories.
In particular, learning is posed as an inverse problem \cite{engl2015regularization}, driven by the inclusion map $\mathsf I$ of a vector-valued Reproducing Kernel Hilbert Space (vvRKHS) into a population $L^2$ space. 
We provide operator-theoretic FS-HP guarantees for two broad classes of system-identification pipelines---(i) flow learning and (ii) vector-field learning. 

Flow learning treats the one-step forward transition map $\Phi$ as the regression target, whereas vector-field learning targets the instantaneous dynamics $f$. 
We establish FS-HP generalization guarantees for one-step transition learners under irregular sampling, and then transfer these guarantees to vector-field estimators by (i) coupling $\Phi$ to $f$ through a backward-looking multistep window and (ii) disentangling a data-driven statistical rate from an explicit discretization bias induced by the step schedule. 
This two-layer analysis covers a broad spectrum of methods: flow-learning pipelines such as ARX, subspace identification, the DMD family, and GP-SSMs.
The vector-field pipelines covered include GP vector-field regression.

Section~\ref{sec:setup} fixes notation, specifies the control-affine dynamics and their flows, introduces the design and step laws, and states the FS-HP targets that determine the population spaces and bounded operators. 
Section~\ref{sec:variable} records the ratio-parameterized multistep form, the bounded-coefficient/step-ratio and order-$p$ consistency assumptions, and proves a uniform local-truncation estimate that controls deterministic bias~\cite[Ch.~3.5]{hairer1993solving}. 
Section~\ref{sec:spectral} assembles the vvRKHS hypotheses, inclusion/covariance/integral operators, and the spectral filter with qualification and Lipschitz properties, together with flow- and vector-field-level source conditions~\cite{de2005learning,bauer2007regularization}. 
Section~\ref{sec:sfl} defines the label map and vLMM residual via Koopman-lag forcing and state aggregation, establishes an observability inequality, and combines these ingredients with spectral regularization to obtain excess-risk bounds in flow space and their lifting to vector field, cleanly separating statistical rates from discretization bias. 
Section~\ref{sec:numeric} evaluates representative pipelines on a controlled mass-spring system, illustrating the issue of benchmarking across source conditions and the fast-sampling paradox.
Our operator notation and bias-variance decomposition follow \cite{rosasco2005spectral}.


\section{Setup and Problem Formulation}\label{sec:setup}
This section makes the problem setup precise. 
We first define notation and basic functional-analytic conventions. 
We then specify the dynamical system and its flows (autonomous and input-driven), introduce the design and step law that governs state visitation time advances on irregular grids, respectively.
Finally, we state the finite-sample, high-probability bounds to be proved in Eq.~\eqref{eq:fs-hp-target}. 
These choices determine the population spaces and bounded operators used throughout.

\subsection{Notation}
\(\mathbb{R}\) denotes the real number line. Vectors in \(\mathbb{R}^n\) are column vectors. For a set \(A\), \(\mathbf{1}_A\) denotes its indicator function. 
For a test function \(\varphi\) and a measure \(\mu\), the pairing is defined by
\(\langle\varphi,\mu\rangle \coloneqq \int \varphi d\mu.\)
We write \(a\lesssim b\) when there exists a universal constant \(C>0\) with \(a\le C b\), and we write \(a\asymp b\) when \(a\lesssim b\) and \(b\lesssim a\).
For a Hilbert space \(\mathcal H\), \(\langle u,v\rangle_{\mathcal H}\) denotes the inner product and \(\|u\|_{\mathcal H}\) denotes the norm. 
On \(\mathbb R^n\), the inner product is \(\langle x,y\rangle = x^\top y\) and the Euclidean norm is \(\|x\|_2=(x^\top x)^{1/2}\). 
The space \(L^2(\mathcal X,\mu)\) consists of square-integrable functions and its norm is
\(\|f\|_{L^2} = \sqrt{\int |f|^2 d\mu}.\)
The space \(\mathcal C(\mathcal X)\) denotes continuous real-valued functions on \(\mathcal X\), and \(\mathcal C^k(\mathcal X)\) denotes the \(k\)-times continuously differentiable functions.
The Banach space \(\mathcal L(\mathcal H_1,\mathcal H_2)\) denotes bounded linear operators \(\mathsf T:\mathcal H_1\to\mathcal H_2\), and the operator norm is
\(\|\mathsf T\|_{\mathcal L} = \sup_{\|x\|=1}\|\mathsf T x\|.\)
The adjoint of \(\mathsf T\) is written \(\mathsf T^*\), and \(\mathrm{id}\) denotes the identity operator. An operator is self-adjoint when \(\mathsf T=\mathsf T^*\), and it is positive semidefinite when \(\mathsf T\succeq 0\). 
For a self-adjoint operator \(\mathsf T\), \(\lambda_{\min}(\mathsf T)\) and \(\lambda_{\max}(\mathsf T)\) denote its minimal and maximal eigenvalues.
The set \(\mathcal P(\mathcal X)\) denotes Borel probability measures on \(\mathcal X\). 
For \(\rho\in\mathcal P(\mathcal X)\), \(\operatorname{supp}(\rho)\) denotes the support and \(\delta_x\) denotes the Dirac mass at \(x\). 
We write \(\mathbb E[\cdot]\) for expectation and \(\mathbb E[\cdot|\cdot]\) for conditional expectation.
We write \(a_n=\mathcal O(b_n)\) when \(\sup_n |a_n/b_n|<\infty\).

\subsection{Dynamical System and Flows}
In SFL, the flow is the primary regression object: we view the problem of learning vector-fields from trajectory data as identifying the one-step map $(x,h)\mapsto \Phi^{h}(x)$ in an $L^2$ space and then lift the flow estimator to the vector field. 
This choice makes the feature-label pairs well-posed under irregular sampling and enables us to apply spectral regularization methods in learning theory to system identification problems.
With this motivation, we now define the dynamical model and its flows.
Let $\mathcal X\subset\mathbb R^n$ be compact and $\mathcal B=\mathcal B(\mathcal X)$ its Borel $\sigma$-algebra. 
Consider the control‑affine dynamics:
$$
\dot x(t)=f(x(t)) + g(x(t)) u(t),\qquad x(0)=x_0,
$$
with $f:\mathcal X\to\mathbb R^n$, $g:\mathcal X\to\mathbb R^{n\times m}$ locally Lipschitz, and $u:\mathbb R_{\ge 0}\to\mathbb R^m$ measurable, locally bounded. 
Assume forward invariance: $x(t)\in\mathcal X$ for $t\ge 0$.

Furthermore, let autonomous flow with zero input $\phi^t:\mathcal X\to\mathcal X$ solve $\tfrac{d}{dt}\phi^t(x)=f(\phi^t(x))$ with $\phi^0=\mathrm{id}$ and semigroup property $\phi^{t+s}=\phi^t\circ\phi^s$.
Define input-driven flow $\Phi_u^{t,s}:\mathcal X\to\mathcal X$ that maps $x(s)\mapsto x(t)$ and satisfies:
\begin{align*}
    \tfrac{d}{dt}\Phi_u^{t,s}(x) &= f(\Phi_u^{t,s}(x)) + g(\Phi_u^{t,s}(x)) u(t),\\
    \Phi_u^{s,s}&=\mathrm{id}, \quad \Phi_u^{t,s}=\Phi_{\tau^s u}^{t-s,0}.
\end{align*}
where $\tau^s u$ is the time-shift operator of the input signal $u$ by $s$, such that $(\tau^s u)(t) \coloneqq u(t+s), \forall t\ge 0$.
We write $\Phi_u^{t}\coloneqq\Phi_u^{t,0}$. Special case: $u\equiv 0\Rightarrow \Phi_0^t=\phi^t$.
We denote flows generated by an autonomous vector field $f$ and control-affine dynamics $(f, g)$ as $\phi^h(\ \cdot\ ; f)$ and $\Phi^h_u(\ \cdot\ ; f, g)$, respectively.
Furthermore, we use $\Phi^h(\cdot)$ when the discussion is agnostic with respect to the existence of a system input $u$, and $\Phi^h_\rho(\cdot)$ when the input is determined by a design law $\rho$.

\subsection{Design Law (State Visitation Distribution)}
The \emph{design law} $\rho_{\mathcal X}^{\mathrm{design}}\in\mathcal P(\mathcal X)$, specifies where the experiment spends time in $\mathcal X$, and
$\rho_{\mathcal X}^{\mathrm{design}}$ is a Borel probability measure on $(\mathcal X,\mathcal B)$.
There are multiple ways to construct a design law; we are interested in the
continuous-time time-average law. 
Along a single path $x(t)=\Phi_u^{t}(x_0)$:
      \begin{equation*}
          \lim_{T\to \infty}\bigg\langle\varphi, \frac{1}{T}\int_0^T\delta_{x(t)}dt\bigg\rangle = \langle\varphi, \rho_{\mathcal X}\rangle, \quad \forall\ \varphi\in \mathcal{C}(\mathcal{X}).
      \end{equation*}
      where $\delta_{x}$ is the Dirac measure: $\delta_x(A)=\mathbf 1_{A}(x)$, $A\in\mathcal B$.
The design law $\rho_{\mathcal X}^{\mathrm{design}}$ determines coverage of states; identifiability over the identification domain $\mathcal X_{\rm ID}$ requires $\mathcal X_{\rm ID}\subseteq \operatorname{supp}(\rho_{\mathcal X}^{\mathrm{design}})$.
\begin{remark}
    The design law induces the geometry in which we measure errors and therefore governs identifiability.
    In particular, we can view identifiability as an input-design problem: choose an input policy $u(\cdot)$ so that the induced $\rho_{\mathcal X}^{\mathrm{design}}(u)$ places mass on the domain of interest.
    Moreover, the population risks $\mathcal{E}(\Phi_\rho)$ and $\mathcal{E}(f_\rho)$ used later in the problem formulation are squared $L^2$-errors with respect to the design weights $\rho^{\rm design}_{\mathcal X} \times \rho_{\mathbf h}$ and $\rho^{\rm design}_{\mathcal X}$, respectively.
    The requirement of $\mathcal X_{\rm ID}\subseteq \operatorname{supp}(\rho_{\mathcal X}^{\mathrm{design}})$ ensures that discrepancies on the identification domain are penalized and thus identifiable.
    When a specific identification algorithm uses inputs explicitly,
    we overload \(x\) as \(x\coloneqq(x,u)\) and regard \(\rho_{\mathcal X}^{\mathrm{design}}\)
    as the joint law of sampled states and inputs.
\end{remark}

\subsection{Step Law (Time‑Advance Distribution)}
Let $\boldsymbol{h} \coloneqq (0,h_{\max}]$;
the \emph{step law} $\rho_{\boldsymbol h}\in\mathcal{P}\big(\boldsymbol{h}\big)$ specifies how far we advance in time when forming one‑step labels $(x_k,h_k)\mapsto x_{k+1}$.
It is a probability measure on the interval $(0,h_{\max}]$.
A step law $\rho_{\boldsymbol h}$ can be constructed through index‑average:
  \begin{equation*}
      \lim_{N\to \infty}\bigg\langle\varphi, \frac{1}{N}\sum_{k=0}^{N-1}\delta_{h_k}\bigg\rangle = \langle\varphi, \rho_{\boldsymbol h}\rangle, \quad \forall\ \varphi\in \mathcal{C}(\boldsymbol{h}).
  \end{equation*}
When modeling a step-window of length $M$, we consider a step-window $[h_1,\dots,h_{M-1}]$ together with the window anchor, $h_{\rm anc}$.
Let the bounded step-ratio vector be:
$\zeta(M-1) \coloneqq [\zeta_1, \dots, \zeta_{M-1}] \in\boldsymbol\zeta\subset(0,\bar\zeta]^{M-1},$
where $\zeta_k \coloneqq h_k/h_{k-1}$ and $\zeta_{M-1} = h_{M}/h_{M-1}$.
We define a $M$-steps window as a tuple $T_M \coloneqq (h_{\rm anc}, \zeta(M-1)) \in \boldsymbol{h}\times \boldsymbol{\zeta} \eqqcolon \mathcal{T}$, with $\rho_{\mathcal{T}} \in \mathcal{P}(\mathcal{T})$.
We are ready to formulate the problem of identifying the flow and vector field FS-HP bounds in the SFL framework.

\subsection{Problem Formulation (FS-HP bound)}
Fix the design law $\rho_{\mathcal X}^{\mathrm{design}}\in\mathcal P(\mathcal X)$ and the step law
$\rho_{\boldsymbol h}\in\mathcal P((0,h_{\max}])$. 
Let $\ell$ denote the number of effective samples and let $H\sim\rho_{\boldsymbol h}$ denote the anchor step, where $\mathbb E[H^{q}]$ reduces to $h^{q}$ for a deterministic schedule.
We define the flow feature-label pairs $\{((x_i, h_i), y_i)\}_{i=1}^\ell$.
Let $\rho((x,h),y)=\rho(y\mid x,h)\rho_{\mathcal X,\boldsymbol h}^{\rm design}(x,h)$ with $\rho_{\mathcal X,\boldsymbol h}^{\rm design} \coloneqq \rho_{\mathcal X}^{\mathrm{design}} \times \rho_{\boldsymbol h}$.
We denote the population flow space as $\mathcal{W}$ and likewise the population vector-field space as $\mathcal{V}$.
The target flow $\Phi_\rho \in \mathcal{W}$ is:
\begin{align*}
    \Phi^h_\rho(x) \coloneqq \mathbb E[Y\mid X=x,H=h] =\int_{\mathcal X}y d\rho(y\mid x,h).
\end{align*}
Similarly, the target vector-field $f_\rho \in \mathcal{V}$ is defined as:
\begin{equation*}
    f_\rho(x)
    \coloneqq
    \lim_{\epsilon\downarrow 0} 
    \mathbb{E} \left[\frac{Y-X}{H}\bigg| X=x,0<H\le \epsilon\right],
\end{equation*}
Under the flow source condition
$\Phi_\rho\in\Omega^{(\mathcal W)}_{r,R}$, \cite[Eq.~9]{rosasco2005spectral}, and a spectral filter ${\mathsf g}_\lambda$ with
qualification $\nu\ge r$ and Lipschitz exponent $\mu$ (set $\beta\coloneqq\max\{1,2\mu\}$)~\cite[Def.~1]{rosasco2005spectral}, 
our goal is to establish \emph{finite-sample, high-probability} bounds of the form:
\begin{align}\label{eq:fs-hp-target}
    \underbrace{\|\widehat{\Phi}-\Phi_\rho\|_{\mathcal W}^{2}}_{\text{flow excess risk}}
    & \le 
    \log \frac{4}{\eta}  C_{\rm flow} \ell^{-\frac{2r}{2r+\beta}},\\[3pt]
    \underbrace{\|\widehat f-f_\rho\|_{\mathcal V}^{2}}_{\text{vector-field excess risk}}
    & \le 
    \frac{C_{\rm fit}}{c_{\rm obs}^{2}(h)} \log \frac{4}{\eta} \ell^{-\frac{2r}{2r+\beta}}
     + 
    \frac{C_{\rm bias}}{c_{\rm obs}^{2}(h)} \mathbb E[H^{2p+2}],
\end{align}
which holds with probability at least $1-\eta$ over the model $\rho_{\mathcal X,\boldsymbol h}^{\rm design}$.
Fig.~\ref{fig:dep-graph} shows the road-map to prove the FS-HP bounds for the flow excess risk in Lemma~\ref{lm:flow-FS-HP-bound} and the FS-HP bounds for the vector field excess risk in Theorem~\ref{th:vec_field_pac}.

\begin{figure}[t]
    \centering
    \usetikzlibrary{matrix,positioning,arrows.meta,fit,backgrounds,calc}
    \begin{tikzpicture}[
      >=Latex,
      box/.style={
        draw, rounded corners, align=center, inner sep=3pt, 
        text width=32mm, 
        font=\small
        },
      ass/.style={box, dashed, fill=gray!10},
      lem/.style={box, fill=gray!15},
      thm/.style={box, fill=gray!25},
      cor/.style={box, fill=gray!10},
      prop/.style={box, fill=gray!10},
      group/.style={draw, dashed, rounded corners, inner sep=6pt, fill=gray!3}
    ]
    
    \matrix (M) [matrix of nodes, column sep=6mm, row sep=3mm]{
    |[ass] (A1)| {
        Assum.~\ref{as:bound_step_ratio}\\
        bounded step ratios/coeffs ($\zeta, \alpha, \beta$)
    }  & 
    |[ass] (SF)| {
        Filter qual./Lipschitz\\
        (Def.~\cite[Def.~1]{rosasco2005spectral})
    }\\
    |[ass] (A2)| {Assum.~\ref{as:vstep_consistency}\\order-$p$ consistency}      & 
    |[ass] (SCv)| {Source cond. on vector field\\$\Phi_\rho\in\Omega^{(\mathcal V)}_{r,R}$} \\
    |[ass] (A3)| {Assum.~\ref{as:ode_regularity}\\ODE regularity}               & 
    |[ass] (SCw)| {Source cond. on flow\\$\Phi_\rho\in\Omega^{(\mathcal W)}_{r,R}$} \\
    |[lem] (LTE)| {Lemma~\ref{lem:uniform-LTE-VM}\\Uniform LTE}                & 
    |[lem] (Obs)| {Lemma~\ref{lem:obs}\\Observability via $\mathsf B$} \\
    |[thm] (VF)| {Thm.~\ref{th:vec_field_pac}\\Vector-field FS--HP bound}
    & 
    |[lem] (Flow)| {Lemma~\ref{lm:flow-FS-HP-bound}\\Flow FS--HP bound} \\
    };
    
    \begin{pgfonlayer}{background}
      \node[group, fit=(A1)(A2)(A3),
            label={[yshift=0ex]above:\strut vLMM Assumptions}] (GvLMM) {};
      \node[group, fit=(SF)(SCw),
            label={[yshift=0ex]above:\strut {Spectral Regularization}}] (Gspectral) {};
    \end{pgfonlayer}
    
    \path[->]
      (A1.west) edge[bend right=24] (LTE.west)
      (A2.west) edge[bend right=16] (LTE.west)
      (A3.west) edge[bend right=12] (LTE.west)
    
      
      (SF.east) edge[bend left=12] (Flow.east)
      (SCw.east) edge[bend left=12] (Flow.east)
      (LTE.south) edge[bend right=0] (VF.north)
      (Obs.west) edge[bend right=12] (VF.east)
      (Flow.west) edge[bend left=12] (VF.east)

      (SCv) edge (SCw)
      ;
    
    
    
    \draw[->]
      (A1.east)
       .. controls +(6mm,0) and +(0mm,-0mm) ..
      ($ (A1)!0.5!(Obs) $)
       .. controls +(0mm,0mm) and +(-6mm,0) ..
      (Obs.west);
    
    \end{tikzpicture}
    \caption{Dependency graph linking assumptions, lemmas, and theorems
    used in the SFL analysis.}
    \label{fig:dep-graph}
\end{figure}

\section{Variable Step Linear Multistep Method (vLMM) Theory}\label{sec:variable}
This section records some vLMM facts we use to turn flow-space error into vector-field error within SFL. 
The role of vLMM in the SFL framework is twofold: 
(i) it supplies the \emph{windowed} surrogate that links vector fields to flows through the multistep forcing/aggregation operators (enabling observability), and 
(ii) it controls the \emph{deterministic discretization bias} via order-$p$ truncation guarantees.
Since, for variable-step methods, consistency depends only on step-ratio and not on absolute time shifts, a step-window is often parameterized by the step-ratio vector $\zeta$ and an anchor step $h_{\mathrm{anc}}$ in the vLMM framework. 

\subsection{General Form}
The \emph{vLMM general form} provides a linear map between an ODE and its solution on an irregular time grid $[h_j]_{-\infty}^\infty$.
Recall that a step-window of length $M$ on a time grid is defined as $T_M \coloneqq (h_{\rm anc}, \zeta(M-1))$.
For each solution $x_{k+M}$, we consider a step-window $T_M(k) = (h_{k+M-1}, \zeta_k(M-1))$, with $\zeta_k(M-1) \coloneqq [\zeta_k, \dots, \zeta_{k+M-1}]$.
\begin{definition}[vLMM General Form]\label{df:vlmm_univ_form}
    A vLMM takes
    the following general form~\cite[Eq.~5.15]{hairer1993solving}:
    \begin{equation}\label{eq:vlmm_unified}
        x_{k+M} + \sum_{j=1}^{M-1}\alpha_{jk}x_{k+j} = h_{k+M-1}\sum_{j=1}^M\beta_{jk}f(x_{k+j}),
    \end{equation}
    with coefficients: $\alpha_{jk} \coloneqq \alpha_{j}(\zeta_{k+1}, \dots, \zeta_{k+M-1}),
    \beta_{jk} \coloneqq\beta_{j}(\zeta_{k+1}, \dots, \zeta_{k+M-1})$,
    functions of the step ratios $\zeta_k(M-1)$.
\end{definition}

Before stating the deterministic bias assumptions, we instantiate the unified form \eqref{eq:vlmm_unified} with the standard variable-step families: Adams-Bashforth (AB), Adams-Moulton (AM), and backward-differentiation formulas (BDF) \cite{hairer1993solving}. 
Each of the above methods arises from a suitable choice of the ratio-dependent coefficients, $(\alpha_{jk}(\zeta),\beta_{jk}(\zeta))$, and admits a windowed-flow representation aligned with our SFL label operator, bringing explicit, implicit, and stiff-stable linear discretizations under one single framework.
Furthermore, notice that the popular Savitzky-Golay (SG) derivative estimators arise from local polynomial least-squares fitting on a sliding window \cite{savitzky1964smoothing,schafer2011savitzky}. 
Therefore, for uniform steps and one-sided windows, SG estimators reduce to the classical finite-difference derivative weights underlying multistep formulas. 
Within SFL framework, SG estimators can be seen as an alternative windowed label operator providing derivative labels complementary to vLMM flow labels.

\subsection{Assumptions for Deterministic Bias}
We now list the assumptions needed to obtain deterministic bounds on the vLMM-induced flow approximation error.
\begin{definition}[Order-$p$ consistency]~\cite[Def.~5.2]{hairer1993solving}\label{df:vstep_consistency}
    A vLMM \eqref{eq:vlmm_unified} is order-$p$ if:
    \begin{equation*}
        q(x_{k+M}) + \sum_{j=1}^{M-1}\alpha_{jk}q(x_{k+j}) = h_{k+M-1}\sum_{j=0}^M \beta_{jk} q'(x_{k+j})
    \end{equation*}
    holds for all polynomials $q(x)$ of degree $\leq p$ and grids $[h_j]_{-\infty}^\infty$.
\end{definition}
Notice that the window size $M$ is the number of back steps a $M$-step vLMM uses, whereas the \emph{order} $p$ is the accuracy of the method.
They are related but not identical: in classical families, AB has $p=M$, AM has $p=M+1$, and BDF has $p=M$, with zero-stability only up to $M\le 6$.
Moving onward, we state the key assumptions needed from the vLMM theory to establish an FS-HP bound in the SFL framework.
\begin{assumption}[Bounded step ratios and coefficients]\label{as:bound_step_ratio}
    There exist constants $0<\underline{\zeta}\le\bar{\zeta}<\infty$ such that:
  \[
  \underline{\zeta}\ \le\ \zeta_k\ \le\ \bar{\zeta},\qquad \forall k.
  \]
  Moreover, the variable-step coefficients $\{\alpha_{jk},\beta_{jk}\}$ are uniformly bounded.
\end{assumption}
\noindent Notice that bounded coefficients $\alpha_{jk}, \beta_{jk}$ follow from bounded step ratios for AB/AM methods and BDF \cite[Lem.~5.3]{hairer1993solving}.
\begin{assumption}\label{as:vstep_consistency}
    The vLMM \eqref{eq:vlmm_unified} is order-$p$ consistent in the sense of Definition~\ref{df:vstep_consistency}.
\end{assumption}
\begin{assumption}[ODE regularity]\label{as:ode_regularity}
    The right-hand side $f$ is sufficiently smooth, i.e., $f\in \mathcal{C}^{p+1}$ and locally Lipschitz on the forward-invariant compact set $\mathcal{X}$.
\end{assumption}

\section{Spectral Regularization Learning Theory}\label{sec:spectral}
This section assembles the spectral-regularization ingredients used by SFL. 
We adopt the operator-theoretic viewpoint of \cite{rosasco2005spectral,de2005learning,bauer2007regularization}: population and sampling Hilbert spaces are linked by inclusion maps whose covariance operators are spectrally filtered to stabilize inversion. 
Concretely, we fix the vector-field space $\mathcal V$, the \emph{population flow space} $\mathcal{W}^\circ$, and the \emph{windowed flow space} $\mathcal W$ on which learning acts; 
define the inclusions $\mathsf I_{\mathcal V},\mathsf I_{\mathcal W}$ with covariance operators $\mathsf T_{\mathcal V},\mathsf T_{\mathcal W}$ (and integral operators $\mathsf L_{\mathcal V},\mathsf L_{\mathcal W}$); 
and introduce a spectral filter ${\mathsf g}_\lambda$ with qualification $\nu$ and Lipschitz exponent $\mu$. 
In SFL, the estimator is obtained by filtering $\mathsf T_{\mathcal W}$ (componentwise in the vvRKHS), yielding a flow predictor in $\mathcal H_{\mathcal W}$ that we subsequently \emph{lift} to the vector field using the multistep observability inequality and vLMM bias control on irregular grids.

\subsection{Population and Sampling Hilbert Spaces}
We first encounter the notion of population flow $\mathcal{W}$ and vector-field space $\mathcal{V}$ in the problem formulation.
In this section, we refine the two notions by formally introducing the three population Hilbert spaces $\mathcal{V}, \mathcal{W}, \mathcal{W}^\circ$ and a finite-dimensional sampling space $\mathcal{Y}$. 
The population vector-field space $\mathcal{V}$ is defined as:
$$\mathcal{V}\coloneqq L^2(\mathcal X,\rho_{\mathcal X}^{\rm design};\mathbb R^n),\quad \mathcal{H}_{\mathcal V}\subset \mathcal{V},$$
where $\mathcal{H}_{\mathcal V}$ is a vvRKHS equipped with an inner product, $\langle\cdot,\cdot\rangle_{\mathcal{H}_{\mathcal V}}$, induced by the measurable, bounded port-Hamiltonian kernel $k_{\rm phs}$~\cite{beckers2022gaussian} and densely embedded in $\mathcal{V}$ with the inclusion map $\mathsf I_{\mathcal V}:\mathcal H_{\mathcal V} \to \mathcal{V}$. 
Recall that $\rho_{\mathcal X,\boldsymbol h}^{\rm design} \coloneqq \rho_{\mathcal X}^{\mathrm{design}} \times \rho_{\boldsymbol h}$, the single-step flow space is:
\begin{equation*}
    \mathcal{W}^\circ\coloneqq L^2\big(\mathcal X\times\boldsymbol h,\rho_{\mathcal X,\boldsymbol h}^{\rm design};\mathbb R^n\big), \quad \mathcal{H}_{\mathcal{W}^\circ}\subset \mathcal{W}^\circ,
\end{equation*}
with the embedded vvRKHS $\mathcal H_{\mathcal{W}^\circ}$ and the inclusion map defined as $\mathsf I_{\mathcal{W}^\circ}:\mathcal H_{\mathcal{W}^\circ} \to \mathcal{W}^\circ$.
Notice that $\mathcal{W}^\circ$ is the space in which the one-step exact flow $\Phi^h(x; f_\rho)$ lives.
For the windowed flow space, 
we recall the $M$-steps window $T_M \in \mathcal{T}$ and its distribution $\rho_{\mathcal{T}}$.
Denote the backshift times by $\tau_j(\zeta)\ge 0$ measured from $h_{k+M-1}$ and a composite measure $\rho^{\rm design}_{\mathcal X, \mathcal{T}} \coloneqq
\rho_{\mathcal X}^{\rm design}\times\rho_{\mathcal{T}}$. 
Define the windowed flow space:
\[
\mathcal{W}
\coloneqq L^2 \big(\mathcal{X} \times\mathcal{T},\rho^{\rm design}_{\mathcal X, \mathcal{T}};\ \mathbb R^n\big),\quad \mathcal H_{\mathcal{W}}\subset\mathcal{W}.
\]
The Hilbert space $\mathcal{W}$ is the space in which spectral regularization acts directly.
\begin{remark}
    Notice that $\mathcal{W}^\circ$ can be embedded in $\mathcal{W}$ using the isometric lifting operator $\mathsf{J}:\mathcal{W}^\circ\to{\mathcal{W}}$.
    For every one-step flow $\Phi \in \mathcal{W}^\circ$, there is a $\mathsf{J}\Phi \in \mathcal W$:
    \[
    (\mathsf{J}\Phi)(x;h, \zeta)\coloneqq\Phi^h(x),
    \]
    that is, $\mathsf{J} \Phi$ is independent of $\zeta$ ($\zeta$-constant). 
    An exact flow $\Phi^{h}(x;f)\in\mathcal{W}^\circ$ induced from $f$ is included in $\mathcal{W}$ via:
    \[
    \Phi^{h}(x; f) = (\mathsf{J}\Phi)(x;h, \zeta, f) \eqqcolon \Psi(x,T; f) \in \mathcal{W}.
    \]
\end{remark}
Finally, the sampling space for finite data is $\mathcal{Y}\subset \mathbb R^{n\ell}$, equipped with the empirical norm $\|y\|_{\mathcal{Y}}^2=\frac{1}{\ell}\sum_{i=1}^\ell\|y_i\|^2$, with $\ell$ denoting the number of effective data points.

\subsection{Inclusion Maps and Other Important Operators}
Denote the inclusion $\mathsf I_{\mathcal V}:\mathcal H_{\mathcal V}\to \mathcal{V}$.
Define the \emph{covariance operator} $\mathsf T_{\mathcal V}$ and \emph{integral operator} $\mathsf{L}_{\mathcal V}$:
\begin{equation}\label{eq:vf-cov-int-operate}
    \mathsf T_{\mathcal V} \coloneqq \mathsf I_{\mathcal V}^* \mathsf I_{\mathcal V}:\mathcal H_{\mathcal V}\to\mathcal H_{\mathcal V},\qquad \mathsf{L}_{\mathcal V} \coloneqq \mathsf I_{\mathcal V} \mathsf I_{\mathcal V}^*:\mathcal{V}\to \mathcal{V}.
\end{equation}
For learning in flow space we fix a single hypothesis vvRKHS $\mathcal H_{\mathcal W}\subset \mathcal{W}$ with bounded evaluation and inclusion $\mathsf I_{\mathcal W}:\mathcal H_{\mathcal W}\to \mathcal{W}$. 
Define:
\begin{equation}
    \label{eq:flow-cov-int-operate}
    \mathsf T_{\mathcal W} \coloneqq \mathsf I_{\mathcal W}^* \mathsf I_{\mathcal W}:\mathcal H_{\mathcal W}\to\mathcal H_{\mathcal W},\qquad \mathsf{L}_{\mathcal W} \coloneqq \mathsf I_{\mathcal W} \mathsf I_{\mathcal W}^*:\mathcal{W}\to \mathcal{W}.
\end{equation}
In spectral regularization learning theory, we view the finite dataset as inducing empirical analogues of these operators. Given a sample \(\mathcal D=\{(z_i,y_i)\}_{i=1}^\ell\) with windowed inputs \(z_i := (x_i,T_M(i))\) and outputs \(y_i\in\mathbb R^n\), let \(\mathsf S_z:\mathcal H_{\mathcal W}\to\mathcal Y\) be the sampling operator \((\mathsf S_z \Phi)_i \coloneqq \Phi(z_i)\) with empirical norm \(\|v\|_{\mathcal Y}^2\). 
The empirical covariance and empirical integral operators are:
\[
\widehat{\mathsf T}_{\mathcal W} \coloneqq \mathsf S_z^* \mathsf S_z:\mathcal H_{\mathcal W}\to\mathcal H_{\mathcal W},
\qquad
\widehat{\mathsf L}_{\mathcal W} \coloneqq \mathsf S_z \mathsf S_z^*:\mathcal Y\to\mathcal Y.
\]
In coordinates, \(\widehat{\mathsf T}_{\mathcal W}\) corresponds to the normalized information matrix built from feature vectors, while \(\widehat{\mathsf L}_{\mathcal W}\) reduces to the kernel Gram matrix acting on sample outputs. 
Spectral regularization theory views \(\widehat{\mathsf T}_{\mathcal W}\) and \(\widehat{\mathsf L}_{\mathcal W}\) as finite-rank perturbations of \(\mathsf T_{\mathcal W}\) and \(\mathsf L_{\mathcal W}\), so that learning becomes an inverse problem of recovering a target function through nearly singular operators. 
The spectral filter \(\mathsf g_\lambda\) stabilizes this inversion by damping directions associated with small eigenvalues.
We consider estimators $\Phi^{\lambda_\ell}_{z,y}\in\mathcal H_{\mathcal W}$ with spectral filter ${\mathsf g}_\lambda$ applied to $\mathsf T_{\mathcal W}$.

\subsection{Filter Qualification and Lipschitz Condition}
Another component we need for constructing the FS-HP bound is the notion of a spectral filter, its qualification, and the Lipschitz condition. 
\begin{definition}\cite[Def.~1]{rosasco2005spectral}
    Let ${\mathsf g}_\lambda: [0, \kappa^2]\to \mathbb{R}$ be a \emph{spectral filter}. 
    Its \emph{qualification} $\nu>0$ is the largest order such that:
    \[
    \sup_{0<\sigma\le \kappa^2} \big|1-\sigma {\mathsf g}_\lambda(\sigma)\big| \sigma^{\nu} \ \le\ \gamma_\nu \lambda^\nu,\qquad \forall 0<\nu\le \bar\nu.
    \]
    We also use the \emph{Lipschitz condition} $|{\mathsf g}_\lambda(\sigma)-{\mathsf g}_\lambda(\sigma')|\le L \lambda^{-\mu}|\sigma-\sigma'|$ for some $\mu>0$, and set $\beta=\max\{1,2\mu\}$.
\end{definition}
A spectral filter is an {approximate inverse} applied to the spectrum of a positive self-adjoint operator (e.g., the covariance/integral operator) via functional calculus: if a covariance operator $\mathsf T=\sum_i \sigma_i \langle\cdot,e_i\rangle e_i$, then ${\mathsf g}_\lambda(\mathsf T)=\sum_i {\mathsf g}_\lambda(\sigma_i)\langle\cdot, e_i\rangle e_i$. 
It attenuates directions with small eigenvalues $\sigma_i$ to stabilize the inversion, while the {residual} $1-\sigma {\mathsf g}_\lambda(\sigma)$ quantifies the bias introduced by regularization. 
The {qualification} $\nu$ is the largest order for which $r_\lambda(\sigma)\leq \lambda^\nu \sigma^{-\nu}$ uniformly. 
Typical values of $\nu$ are: Tikhonov with $\nu=1$; order-$t$ iterated Tikhonov with $\nu=t$; Landweber with $\nu \to \infty$~\cite[Sec.~3.1]{rosasco2005spectral}. 
The Lipschitz condition ensures stability of ${\mathsf g}_\lambda(\mathsf T)$ to empirical perturbations.

\subsection{Source Condition}


Target-kernel alignment is captured by an H\"older-type source condition \cite{engl2015regularization} with exponent \(r>0\); larger \(r\) yields faster rates up to the filter qualification \(\nu\). 
Conceptually, it measures how well \((\Phi_\rho,f_\rho)\) align with the vvRKHS prior. 
Formally, it requires membership in the range of a positive power of the integral operator \(\mathsf L=\mathsf I\mathsf I^*\). 
Since SFL learns in a windowed flow space and then lifts guarantees to the vector field, we impose sources at both levels: a flow source driving the flow FS-HP bound and a vector-field source encoding physics, e.g., Hamiltonian passivity, incompressibility \(\nabla\cdot f=0\), symmetry/equivariance. 
In practice, such structures can be enforced via physics-informed vvRKHS kernels, projection of the lifted estimator, or penalty terms \cite{beckers2022gaussian,baddoo2023physics,cai2021physics}.
For $r>0$ and $R_{\mathcal V}>0$, assume $f_{\rho}$ is the target vector field with the following structure defined through \eqref{eq:vf-cov-int-operate}:
\[
f_\rho \in \Omega^{(\mathcal{V})}_{r,R_{\mathcal V}}\ \coloneqq\ \big\{  f\in \mathcal{V} : f=\mathsf{L}_{\mathcal V}^r v,\ \|v\|_{\mathcal V}\le R_{\mathcal V}  \big\}.
\]
Likewise, for $R_{\mathcal W}>0$,
\[
\Phi_\rho \in \Omega^{(\mathcal{W})}_{r,R_{\mathcal W}}\ \coloneqq\ \big\{ \Phi\in \mathcal{W}:\ \Phi = \mathsf{L}_{\mathcal W}^r w,\ \|w\|_{\mathcal W}\le R_{\mathcal W}  \big\},
\]
where $\mathsf{L}_{\mathcal W}$ is defined in \eqref{eq:flow-cov-int-operate}.
This flow-level source is the one used in Lemma~\ref{lm:flow-FS-HP-bound} to obtain the FS-HP rate $\ell^{-\frac{2r}{2r+\beta}}$ with $\beta=\max\{1,2\mu\}$.

\section{Spectral Flow Learning Theory}\label{sec:sfl}
This section develops SFL's core machinery: variable-step windows yield well-posed labels via Koopman-lag forcing and state aggregation.
We define the label map and vLMM residual, prove an observability inequality and a uniform local-truncation bound, and derive FS-HP excess-risk bounds for flows and for vector fields.

\subsection{Label Map and vLMM Residual}
A key aspect of a theory on learning a vector field from trajectory data is the construction of well-defined input-label pairs.
In our framework, we leverage the Koopman-lag forcing operator $\mathsf{B}$ and state-aggregation operator $\mathsf{A}$ to define the population-level loss via the label map and vLMM residual.
Consider a vLMM as defined in Definition~\ref{df:vlmm_univ_form}. 
Fix the window and anchor it at $t_{k+M}$. 
Define the backward lags:
$$
\tau_0:=0,\qquad 
\tau_i:=\sum_{q=1}^{i} h_{k+M-q}\quad(i=1,\dots,M),
$$
so that each state in the window satisfies:
$$
x_{k+M-i}=\Phi^{-\tau_i} \big(x_{k+M}\big),\qquad i=0,\dots,M.
$$
Equivalently, for $j=1,\dots,M$,
$$
x_{k+j}=\Phi^{-\tau_{M-j}}\big(x_{k+M}\big).
$$
Using the Koopman action $(\mathcal{K}^{t}f)(x)=f(\Phi^{t}(x))$, the right-hand side of \eqref{eq:vlmm_unified} can be rewritten as:
$$
h_{k+M-1}\sum_{j=1}^{M}\beta_{jk}  (\mathcal{K}^{-\tau_{M-j}}f)\big(x_{k+M}\big).
$$
Renaming the indices as $\beta_j(\zeta)\coloneqq \beta_{M-j,k}$ and setting $x\coloneqq x_{k+M}$, we define the Koopman-lag forcing operator:
\begin{equation}\label{eq:koopman-lag-forcing}
    (\mathsf B f)(x;h,\zeta)
    \coloneqq h\sum_{j=0}^{M}\beta_j(\zeta) (\mathcal{K}^{-\tau_j}f)(x),
    \quad\mathsf{B}:\mathcal V\to\mathcal W,
\end{equation}
where $\tau_j \coloneqq \tau_j(\zeta)$ are the cumulative lags determined by $\zeta$ and the anchor step $h\coloneqq h_{k+M-1}$.
Similarly, the state-aggregation operator is defined as:
\begin{equation*}
    (\mathsf A\Phi)(x;h,\zeta)
    \coloneqq \sum_{j=1}^{M}\alpha_j(\zeta) \Phi^{-\tau_j}(x),
    \quad\mathsf{A}:\mathcal{W}^\circ\to\mathcal W.
\end{equation*}
Now, we are ready to define the population-level label map and vLMM residual.
\begin{definition}[Label Map and vLMM residual]
    The population-label map and vLMM residual are:
    \begin{equation}\label{eq:label-map_vlmm-residual}
        \mathsf{Y}_\alpha(\Phi)\coloneqq \mathsf J\Phi+\mathsf A\Phi \in\mathcal W,\ 
        \Delta(f)\coloneqq \mathsf{Y}_\alpha(\Phi)-\mathsf B f\ \in\mathcal W,
    \end{equation}
    respectively.
\end{definition}

\subsection{Observability Inequality and Local Truncation Error (LTE)}
Some useful consequences we can leverage from the above construction include the following.
\begin{lemma}[Observability via $\mathsf{B}$]\label{lem:obs}
Let: 
$$c_{\rm obs}(h)\coloneqq\inf_{\|v\|_{\mathcal{V}}=1}\|\mathsf{B}v\|_{\mathcal{W}}=\sqrt{\lambda_{\min}(\mathsf{B}^*\mathsf{B})};$$
then:
\[
\|\mathsf{B}(f-f_\rho)\|_{\mathcal W}^2 \geq c_{\rm obs}(h)^2\|f-f_\rho\|_{\mathcal V}^2,\qquad \forall f\in \mathcal{V}.
\]
\end{lemma}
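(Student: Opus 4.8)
The plan is to treat this as a standard bounded-below / minimal-singular-value argument for the bounded linear operator $\mathsf B:\mathcal V\to\mathcal W$, separating the claimed equality from the homogeneity inequality. First I would establish the equality $c_{\rm obs}=\sqrt{\lambda_{\min}(\mathsf B^*\mathsf B)}$. Using the adjoint identity $\|\mathsf B v\|_{\mathcal W}^2=\langle \mathsf B^*\mathsf B v,\,v\rangle_{\mathcal V}$, the definition of $c_{\rm obs}$ rewrites as the constrained Rayleigh quotient
$$c_{\rm obs}^2=\inf_{\|v\|_{\mathcal V}=1}\langle \mathsf B^*\mathsf B v,\,v\rangle_{\mathcal V}.$$
Since $\mathsf B^*\mathsf B$ is self-adjoint and positive semidefinite on $\mathcal V$, the spectral theorem (variational characterization of the bottom of the spectrum) identifies this infimum with $\lambda_{\min}(\mathsf B^*\mathsf B)$, yielding the stated equality.

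Next I would derive the inequality by a scaling argument that exploits the degree-two positive homogeneity of $v\mapsto\|\mathsf B v\|_{\mathcal W}^2$. For any nonzero $v\in\mathcal V$, write $\hat v\coloneqq v/\|v\|_{\mathcal V}$, so that $\|\hat v\|_{\mathcal V}=1$ and, by linearity of $\mathsf B$,
$$\|\mathsf B v\|_{\mathcal W}^2=\|v\|_{\mathcal V}^2\,\|\mathsf B\hat v\|_{\mathcal W}^2\ \ge\ \|v\|_{\mathcal V}^2\,c_{\rm obs}^2,$$
where the final bound uses $\|\mathsf B\hat v\|_{\mathcal W}\ge c_{\rm obs}$ directly from the infimum defining $c_{\rm obs}$. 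The case $v=0$ is trivial since both sides vanish. Specializing to $v=f-f_\rho$, which lies in $\mathcal V$ because $\mathcal V$ is a linear space, produces exactly $\|\mathsf B(f-f_\rho)\|_{\mathcal W}^2\ge c_{\rm obs}^2\|f-f_\rho\|_{\mathcal V}^2$ for all $f\in\mathcal V$.

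I do not expect a serious obstacle, since the statement is essentially the definition of the smallest singular value of a bounded operator between Hilbert spaces. The one point requiring care is the interpretation of $\lambda_{\min}(\mathsf B^*\mathsf B)$ when $\mathcal V$ is infinite-dimensional: there the Rayleigh-quotient infimum equals the bottom of the spectrum of $\mathsf B^*\mathsf B$, which need not be an attained eigenvalue, so $\lambda_{\min}$ should be read as $\min\operatorname{spec}(\mathsf B^*\mathsf B)$. This subtlety does not affect the inequality, which holds verbatim for whatever nonnegative value $c_{\rm obs}$ takes. The strict positivity $c_{\rm obs}>0$ needed for the downstream vector-field bound in Theorem~\ref{th:vec_field_pac} is a separate requirement, equivalent to $\mathsf B$ being bounded below, and would be argued from injectivity plus closed range of the Koopman-lag forcing operator~\eqref{eq:koopman-lag-forcing} rather than from this lemma.
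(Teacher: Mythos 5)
Your proposal is correct and follows essentially the same route as the paper, which simply invokes the standard fact $\|\mathsf{A}v\|^2\ge\lambda_{\min}(\mathsf{A}^*\mathsf{A})\|v\|^2$ for bounded operators; you merely spell out the Rayleigh-quotient identification and the homogeneity/scaling step that the paper leaves implicit. Your added caveats (reading $\lambda_{\min}$ as the bottom of the spectrum in infinite dimensions, and noting that $c_{\rm obs}>0$ is a separate, unproven requirement) are sound and go slightly beyond what the paper records.
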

\begin{proof}
    This is immediate from $\|Av\|^2\ge\lambda_{\min}(A^*A)\|v\|^2$ for any bounded operator $A$.
\end{proof}

\begin{lemma}[Uniform LTE for vLMM residual]\label{lem:uniform-LTE-VM}
Under Assumptions~\ref{as:bound_step_ratio}, \ref{as:vstep_consistency}, and \ref{as:ode_regularity},
there exists a constant
$C_{\mathrm{LTE}}>0$, independent of $t$ and $T$, such that the single-window residual is:
\begin{equation*}
    \|\Delta(f)(x(t);T)\|_2
    \le C_{\mathrm{LTE}}h^{p+1},
\end{equation*}
for all $t \geq 0$, $T\in\mathcal{T}$.
Consequently, in the flow space $L^2$ norm,
\begin{equation}\label{eq:LTE-flow-norm}
    \|\Delta(f)\|_{\mathcal W}
        \le\ C_{\mathrm{LTE}} \sqrt{\mathbb E[H^{2p+2}]}.
\end{equation}
\end{lemma}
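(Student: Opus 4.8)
The plan is to recognize that, evaluated along the exact flow through the base point, $\Delta_{\mathrm{MS}}(f)$ is nothing but the local truncation defect obtained by substituting the true solution into the scheme \eqref{eq:vlmm_unified}, and then to bound it by the classical Taylor/consistency argument, made uniform in $(t,h,\zeta)$. Parameterizing the window by $y(s)\coloneqq\Phi^{s}(x)$ (so $y(0)=x$) and using $(\mathsf J\Phi)(x,h,\zeta)=y(h)$, $\Phi^{-\tau_j}(x)=y(-\tau_j)$, and $(U^{-\tau_j}f)(x)=f(y(-\tau_j))=\dot y(-\tau_j)$, the definitions \eqref{eq:koopman-lag-forcing}--\eqref{eq:label-map_vlmm-residual} collapse the residual to a single $\mathbb R^{n}$-valued linear functional of the smooth curve $y$,
\[
\Delta_{\mathrm{MS}}(f)(x,h,\zeta)=\mathcal L_{h,\zeta}[y]\coloneqq y(h)+\sum_{j=1}^{M}\alpha_j(\zeta)\,y(-\tau_j)-h\sum_{j=0}^{M}\beta_j(\zeta)\,\dot y(-\tau_j).
\]
ODE regularity (Assumption~\ref{as:ode_regularity}, $f\in\mathcal C^{p+1}$ on the compact invariant $\mathcal X$) guarantees $y\in\mathcal C^{p+2}$, licensing the expansions below with continuous remainders.

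The core step is to Taylor-expand each nodal value $y(h)$, $y(-\tau_j)$ and each derivative $\dot y(-\tau_j)$ about $s=0$ to order $p+1$ with integral remainder. Order-$p$ consistency (Assumption~\ref{as:vstep_consistency} / Definition~\ref{df:vstep_consistency}) is precisely the statement that $\mathcal L_{h,\zeta}$ annihilates every time-polynomial of degree $\le p$; hence the Taylor contributions of degrees $0,\dots,p$ cancel identically and only the degree-$(p+1)$ remainder survives. What remains is a finite sum of the coefficients $\alpha_j,\beta_j$ paired with the monomials $\tau_j^{\,p+1}$ and $h\,\tau_j^{\,p}$, each multiplied by $y^{(p+1)}$ evaluated at interior points of the window.

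It remains to make the $h^{p+1}$ scaling uniform. Bounded step ratios (Assumption~\ref{as:bound_step_ratio}) give $\tau_j=\sum_{q\le j}h_{k+M-q}\le C(M,\underline{\zeta},\bar{\zeta})\,h$, so every surviving monomial is $\le C\,h^{p+1}$, while the same assumption bounds $|\alpha_j(\zeta)|,|\beta_j(\zeta)|$ uniformly in $\zeta$. The remaining factor $\sup\|y^{(p+1)}\|$ is controlled by writing $y^{(p+1)}$ through the elementary differentials of $f$ (iterated derivatives of $f$ up to order $p$ evaluated along the trajectory), which are continuous by $f\in\mathcal C^{p+1}$ and hence bounded on the compact forward-invariant $\mathcal X$. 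Collecting these constants yields $\|\Delta_{\mathrm{MS}}(f)(x,h,\zeta)\|_2\le C_{\mathrm{LTE},f}\,h^{p+1}$ with $C_{\mathrm{LTE},f}$ independent of $t,h,\zeta$.

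The flow-norm bound \eqref{eq:LTE-flow-norm} is then immediate: squaring and integrating against $\rho^{\rm design}_{\mathcal X,\boldsymbol h,\boldsymbol\zeta}=\rho^{\rm design}_{\mathcal X}\times\rho_{\boldsymbol h}\times\rho_{\boldsymbol\zeta}$, the $x$- and $\zeta$-marginals integrate to one while the anchor step contributes $\int h^{2p+2}\,d\rho_{\boldsymbol h}=\mathbb E[H^{2p+2}]$, and a square root finishes the claim. I expect the main obstacle to be not the expansion itself but keeping the cancellation uniform over $\zeta$: Definition~\ref{df:vstep_consistency} asserts polynomial reproduction grid-by-grid, whereas the constant $C_{\mathrm{LTE},f}$ must hold uniformly over the whole step-ratio box $\boldsymbol\zeta$. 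The cleanest remedy is a Peano-kernel representation of $\mathcal L_{h,\zeta}$, whose kernel is supported on the window and bounded through Assumption~\ref{as:bound_step_ratio}, which makes both the $\zeta$-uniformity and the $f$-dependence of the constant (through $\|y^{(p+1)}\|_\infty$) transparent.
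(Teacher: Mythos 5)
Your proposal is correct and follows essentially the same route as the paper: identify $\Delta_{\mathrm{MS}}(f)$ evaluated along the exact trajectory with the variable-step difference operator $\mathcal L_{\zeta,h}[y]$, use order-$p$ consistency to kill the Taylor terms of degree $\le p$, use bounded step ratios/coefficients and $f\in\mathcal C^{p+1}$ to make the surviving $h^{p+1}$ remainder uniform in $(t,h,\zeta)$, and then square-integrate against the product design measure to pull out $\mathbb E[H^{2p+2}]$. The only difference is that the paper delegates the Taylor/Peano-kernel step to the cited variable-step LMM results of Hairer et al., whereas you carry it out explicitly, which makes the $\zeta$-uniformity of the constant more transparent.
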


\begin{proof}[Proof]
Let $x(\cdot)$ solve $\dot x=f(x(t)) + g(x(t)) u(t)$ on a forward‑invariant compact set $\mathcal{X}$ so that $\sup_t\|x^{(p+1)}(t)\|<\infty$, where $x^{(p+1)}$ denotes the $(p{+}1)$-order derivate of $x$. Following the notation in \cite{hairer1993solving}, we write the residual componentwise as the variable‑step difference operator:
\begin{equation*}
    \begin{aligned}
        &\mathcal L(x; T)
        = x(t{+}h) \\
        &\quad +\sum_{j=1}^{M-1}\alpha_j(\zeta) x(t{-}\tau_j(\zeta))
         -h\sum_{j=0}^{M}\beta_j(\zeta) \dot x(t{-}\tau_j(\zeta)).
    \end{aligned}
\end{equation*}
By Assumption~\ref{as:vstep_consistency}, which provides order-$p$ consistency, 
Assumption~\ref{as:bound_step_ratio} that guarantees bounded variable‑step coefficients, and 
Assumption~\ref{as:ode_regularity} for a smooth ODE, \cite[Eq.~5.17]{hairer1993solving} yields:
$$
\mathcal L(x; T) = \mathcal{O}(h^{p+1}),
$$
uniformly over admissible $T$ for sufficiently smooth $x(\cdot)$. 
A trivial extension of the constant‑step local truncation error lemma~\cite[Lem.~2.2]{hairer1993solving} to variable steps then implies that the local error at each anchor point is $\mathcal{O}(h^{p+1})$. 
Identifying
$\Delta(f)=\mathcal L(x; T)$ gives the pointwise bound with a constant $C_{\mathrm{LTE}}$ independent of $(t,T)$. 
This proves the first claim.

For the $\mathcal W$ norm, square and integrate over $\rho^{\rm design}_{\mathcal X,\mathcal{T}}$ to get:
\begin{align*}
    \|\Delta(f)\|_{\mathcal W}^{2}
    &=\int \|\Delta(f)(x;T)\|^{2}_2\ d \rho^{\rm design}_{\mathcal{X}, \mathcal{T}}\\
    &\le\ C_{\mathrm{LTE}}^{2}\int h^{2p+2} d\rho_{\boldsymbol h} = C_{\mathrm{LTE}}^{2} \mathbb E[H^{2p+2}].
\end{align*}
Taking the square root of both sides yields \eqref{eq:LTE-flow-norm}.
\end{proof}

Lemmas~\ref{lem:obs} and \ref{lem:uniform-LTE-VM} provide stepping stones needed to lift flow-space guarantees to the vector field: a $\mathsf{B}$-observability inequality that controls $\|f-f_\rho\|_{\mathcal V}$ by $\|\mathsf B(f-f_\rho)\|_{\mathcal W}$, and a uniform bound of the vLMM modeling residual via the moment $\mathbb E[H^{2p+2}]$. 
Next, we formalize the flow regression problem, introduce the excess risk $\mathcal E(\Phi)$ and the target regression function $\Phi_\rho$, and state the flow FS-HP bound obtained by spectral regularization.
Combining that bound with the bounded label operator $\mathsf{Y}_\alpha=\mathsf J+\mathsf A$ and the decomposition
$\mathsf B(\widehat f-f_\rho)=\Delta(\widehat f)+{\mathsf Y}_\alpha(\widehat\Phi-\Phi_\rho)+\Delta(f_\rho)$
yields a two-term vector-field bound featuring a statistical rate in $\ell$ and a deterministic vLMM bias in $\mathbb E[H^{2p+2}]$, scaled by $c_{\rm obs}(h)$. 

\subsection{FS-HP Excess Risk Bound for Flow-Learning}  \label{sec:FL-bound}
We begin by introducing the definition of excess risk, following the formality presented in~\cite{rosasco2005spectral}. 
\begin{definition}[Excess Risk]
    The excess risk of a predictor $\Pi$ with respect to its target $\Pi_\rho$, i.e., the best possible predictor under a design law $\rho$, is $\mathcal{E}(\Pi) \coloneqq \mathcal{R}(\Pi) - \mathcal{R}(\Pi_\rho)$, with the expected risk of $\Pi$ and target predictor $\Pi_\rho$ given as:
    \begin{equation*}
        \mathcal{R}(\Pi)
        \coloneqq\int_{\mathcal{X}\times \mathcal{Y}}\|\Pi(x)-y\|^2 d\rho(x,y),\ \Pi_\rho(\cdot) \coloneqq  \int_{\mathcal Y} y d\rho(y\mid \cdot),
    \end{equation*}
    respectively.
\end{definition}
With flow label $Y\in\mathcal X$ and $\rho((x,T),y)=\rho(y\mid x,h)\rho_{\mathcal X,\mathcal{T}}^{\rm design}(x,T)$, the target flow predictor is:
\begin{align*}
    \Phi^h_\rho(x) &\coloneqq \mathbb E[Y\mid X=x,H=h]\\
    &=\int_{\mathcal X}y d\rho(y\mid x,h).
\end{align*}
For any flow predictor $\Phi\in \mathcal{W}$, let $\mathcal{Z}=\mathcal{X}\times \mathcal{T}$:
\begin{equation}\label{eq:excess-error}
\begin{aligned}
    \mathcal{R}(\Phi)
    &\coloneqq\int_{\mathcal{Z}\times \mathcal{X}}\|\Phi^h(x)-y\|^2 d\rho((x,T),y)\\
    &=\|\Phi-\Phi_\rho\|_{\mathcal{W}}^{2}+ \mathcal{R}(\Phi_\rho).
\end{aligned}
\end{equation}
Therefore, minimizing $\mathcal{E}(\Pi)$ is equivalent to 
approximating $\Pi_\rho$ with $\Phi$ via minimizing $\|\Pi - \Pi_\rho\|$~\cite[Def.~1]{rosasco2005spectral}.
The choice of the estimated flow $\widehat{\Phi} \coloneqq \Phi_{\mathcal{D}}^{\lambda_\ell}$ is inferred from data $\mathcal{D} = {\{z_i, y_i\}}_{i=1}^\ell$ and induced from inductive bias encoded in the source condition.
Under the state design law $\rho_{\mathcal X}^{\mathrm{design}}$, flow
excess risk is probabilistically bounded as follows.

\begin{lemma}[Flow FS-HP bound]\label{lm:flow-FS-HP-bound}
    Suppose $\Phi_\rho \in \Omega^{(\mathcal{W})}_{r, R}$ and $r \leq \nu$, which is the qualification $\mathsf g_{\lambda}$.
    For $0 < \eta \leq 1$, the following inequality holds with probability at least $1-\eta$:
    \begin{equation}
        \|\Phi - \Phi_\rho\|_{\mathcal W}
        \le 
        \log\frac{4}{\eta} C_{\rm flow} \ell^{-\frac{2r}{2r+\beta}},
    \end{equation}
    for some finite positive constant $C_{\rm flow}$ and with $\beta = \max\{1, 2\mu\}$.
\end{lemma}
\begin{proof}
    \cite[Thm.~2]{rosasco2005spectral} applies unchanged to $\mathcal H_{\mathcal W}\subset \mathcal{W}$.
\end{proof}

The rate in Lemma~\ref{lm:flow-FS-HP-bound} isolates the sample-size $\ell$ dependence from other problem-specific quantities. 
All other problem- and algorithm-specific quantities are collected into:
\begin{equation*}
    C_{\rm flow} = 2C_1^2+2\gamma_r^2R^2,
\end{equation*}
with $C_1  =  4\sqrt{2} \kappa M(\sqrt{DB}+\kappa^{5/2}L)$,
exactly as in the finite-sample bound in~\cite[Thm.~2]{rosasco2005spectral}. 
Here $\kappa^2=\sup_x K(x,x)$ bounds the kernel, $M=\sup|x|$ bounds the labels, while $B,D$ and $(L,\mu)$ are the spectral-filter constants defined by
$\sup_{0<\sigma\le\kappa^2}|g_\lambda(\sigma)|\le B/\lambda$, $\sup_{0<\sigma\le\kappa^2}|\sigma g_\lambda(\sigma)|\le D$, and
$|g_\lambda(\sigma)-g_\lambda(\sigma')|\le L \lambda^{-\mu}|\sigma-\sigma'|$, with $\beta\coloneqq\max\{1,2\mu\}$. The constant $\gamma_r$ is the qualification residual at order $r$:
\[
\sup_{0<\sigma\le\kappa^2}\bigl|1-\sigma g_\lambda(\sigma)\bigr| \sigma^{r} \le \gamma_r \lambda^{r},
\]
which yields the order-$\lambda^{2r}$ approximation term when $\Phi_\rho\in\Omega^{(\mathcal W)}_{r,R}$.
All these ingredients combine to give:
\[
\|\widehat\Phi-\Phi_\rho\|_{\mathcal W}^2
 \le 
\log \frac{4}{\eta} \bigl(2C_1^2+2\gamma_r^2R^2\bigr) \ell^{-\frac{2r}{2r+\beta}}.
\]
The factor $\log(4/\eta)$ arises from concentration bounds for Hilbert-valued averages.
The concentration bounds follow from the Pinelis-Sakhanenko inequality for Hilbert spaces~\cite[Lemma~8]{de2005learning} and can be proved using~\cite[Theorem~3.3.4]{yurinsky2006sums}.

The flow FS-HP bound 
is particularly useful when we assume that the target dynamics evolves discretely with a deterministic uniform step schedule, where the step size is irrelevant to the target flow \(\Phi_\rho\). 
In SFL, methods targeting such \(\Phi_\rho\) differ mainly in the hypothesis class \(\mathcal H_{\mathcal W}\), 
the source condition \(\Phi_\rho \in \Omega^{(\mathcal W)}_{r,R_{\mathcal W}} = \{\Phi = \mathsf L_{\mathcal W}^r w : \|w\|_{\mathcal W} \le R_{\mathcal W}\}\), spectral regularizer \(\mathsf g_\lambda\), and qualification \(\nu\).
Pipelines are therefore characterized by their choice of \(\Omega^{(\mathcal W)}_{r,R_{\mathcal W}}\) and \(\mathsf g_\lambda\), which together govern the rate and stability of the excess risk.
\begin{example}[Methods with Linear/Dictionary Feature Map]
    Let $\mathcal H_{\mathcal W}$ be the vvRKHS induced by a finite dictionary
    $\psi(x)\in\mathbb R^n$ with kernel $K(x,x')=\psi(x)^\top\psi(x')$. 
    The population integral operator
    $\mathsf L_{\mathcal W}$ acts on this finite-dimensional space. 
    If the model is well specified
    ($\Phi_\rho\in\operatorname{span}\{\psi_j\}$), then $\Phi_\rho=\mathsf L_{\mathcal W}^r w$ holds for any $r\ge0$
    with $R_{\mathcal W}$ determined by the coefficient vector. 
    If misspecified, $r>0$ measures alignment
    of $\Phi_\rho$ with the principal directions of $\mathsf L_{\mathcal W}$.
    System identification methods that fall into this category include: ARX and DMDc~\cite{proctor2016dynamic}.
\end{example}
On the other hand, Bayesian methods such as Gaussian process state space models (GP-SSM) are also naturally covered by the SFL framework, with a side note that the regularization coefficient $\lambda$ is chosen to be $\sigma^2_y / \ell$, where $\sigma_y$ is the standard deviation of the additive observation noise.

\begin{remark}[TSVD / spectral cut-off]
    Spectral cut-off (TSVD), \(\mathsf g_{\lambda}^{\mathrm{cut}}(\sigma)=\sigma^{-1}\mathbf{1}\{\sigma\ge\lambda\}\), has arbitrarily high qualification but violates the Lipschitz condition used in standard finite-sample proofs \cite{rosasco2005spectral,de2005learning}, so the baseline FS-HP theory does not apply verbatim. 
    Nonetheless, TSVD can be handled within extended spectral-regularization frameworks for non-smooth filters and projection schemes, yielding comparable rates under suitable assumptions; see \cite{engl2015regularization,bauer2007regularization}.
\end{remark}
Table~\ref{tab:flow-learners} provides a more detailed listing of various system identification methods with their corresponding hypothesis class and regularizer $\mathsf g_\lambda$.

\begin{table}[t]
    \centering
    \begin{tabular}{lcc}
        \toprule
        Pipeline & Hypothesis in $\mathcal H_{\mathcal W}$ & Filter $\mathsf g_\lambda$\\
        \midrule
        ARX/ARMAX/BJ & Linear/dictionary on $x, u$ & Ridge or TSVD\\
        NARX & Finite nonlinear dictionary & Ridge or TSVD\\
        Subspace ID & Past \& future Hankel stack & TSVD (order choice)\\
        DMD/DMDc & Linear in $x, u$ & Ridge or TSVD\\
        EDMD/HAVOK & Linear in $\psi(x)$ (delays) & Ridge or TSVD\\
        KDMD & Kernel ridge in $x$ & Tikhonov ($\nu{=}1$)\\
        GP-SSM & GP mean in RKHS & Tikhonov ($\lambda{=}\sigma_y^2/\ell$)\\
        \bottomrule
    \end{tabular}
    \caption{Classic flow-learning pipelines as spectral-filter estimators in the windowed flow space. Lemma~\ref{lm:flow-FS-HP-bound} applies verbatim.}
    \label{tab:flow-learners}
\end{table}

\subsection{FS-HP Excess Risk Bound for Vector-Field Learning}  \label{sec:Vbound}
Since 
discretization bias often matters in practice, we develop an FS-HP excess-risk bound for vector-field learning.
\begin{theorem}[Vector-field FS-HP bound]\label{th:vec_field_pac}
    Suppose $\Phi_\rho \in \Omega^{(\mathcal{W})}_{r, R}$ and $r \leq \nu$, which is the qualification $\mathsf g_{\lambda}$.
    For $0 < \eta \leq 1$, the following inequality holds with probability at least $1-\eta$:
    \begin{equation}\label{eq:vf-pac-bound}
        \|\widehat f - f_\rho\|_{\mathcal{V}}^{2}
        \le
        \frac{C_{\rm fit}}{c_{\rm obs}^{2}(h)} \log\frac{4}{\eta} \ell^{-\frac{2r}{2r+\beta}}
         + 
        \frac{C_{\rm bias}}{c_{\rm obs}^{2}(h)}\ \mathbb E[H^{2p+2}].
    \end{equation}
    for some finite positive constant $c_{\rm obs}(h)$, $C_{\rm fit}$, and $C_{\rm bias}$.
\end{theorem}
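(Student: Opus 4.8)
The plan is to transfer the flow guarantee of Theorem~\ref{th:flow-FS-HP-bound} to the vector field by combining observability with the multistep residual decomposition, so that the statistical and discretization-bias terms emerge separately. First I would invoke the vector-field analogue of the excess-risk identity~\eqref{eq:excess-error}, namely $\mathcal E(\widehat f)-\mathcal E(f_\rho)=\|\widehat f-f_\rho\|_{\mathcal V}^2$, reducing the goal to bounding $\|\widehat f-f_\rho\|_{\mathcal V}^2$. Applying Lemma~\ref{lem:obs} gives $\|\widehat f-f_\rho\|_{\mathcal V}^2\le c_{\rm obs}^{-2}\,\|\mathsf B(\widehat f-f_\rho)\|_{\mathcal W}^2$, which moves the entire analysis into the windowed flow space $\mathcal W$, where the flow bound already lives.

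Next I would expand $\mathsf B(\widehat f-f_\rho)$ using the vLMM residual definition in~\eqref{eq:label-map_vlmm-residual}. Since $\mathsf B f=\mathsf Y_\alpha(\Phi(\cdot;f))-\Delta_{\mathrm{MS}}(f)$ for any $f$, subtracting the instances at $\widehat f$ and $f_\rho$ and using linearity of $\mathsf Y_\alpha$ yields the three-term decomposition $\mathsf B(\widehat f-f_\rho)=\mathsf Y_\alpha(\widehat\Phi-\Phi_\rho)-\Delta_{\mathrm{MS}}(\widehat f)+\Delta_{\mathrm{MS}}(f_\rho)$. Taking $\|\cdot\|_{\mathcal W}$, applying the triangle inequality, and then the elementary bound $(a+b+c)^2\le 3(a^2+b^2+c^2)$ cleanly separates the flow-fit contribution from the two discretization-bias contributions.

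I would then bound the three pieces individually. For the fit term, $\mathsf Y_\alpha=\mathsf J+\mathsf A$ is a bounded operator (the lifting $\mathsf J$ is isometric and $\mathsf A$ is a finite sum of backward-flow composition operators whose coefficients are bounded by Assumption~\ref{as:bound_step_ratio}), so $\|\mathsf Y_\alpha(\widehat\Phi-\Phi_\rho)\|_{\mathcal W}^2\le\|\mathsf Y_\alpha\|_{\mathcal L}^2\,\|\widehat\Phi-\Phi_\rho\|_{\mathcal W}^2$; Theorem~\ref{th:flow-FS-HP-bound} then controls $\|\widehat\Phi-\Phi_\rho\|_{\mathcal W}^2$ by $C_{\rm flow}\log(4/\eta)\,\ell^{-2r/(2r+\beta)}$ on an event of probability at least $1-\eta$. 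For the two residual terms, Lemma~\ref{lem:uniform-LTE-VM} gives $\|\Delta_{\mathrm{MS}}(\widehat f)\|_{\mathcal W}^2\le C_{\mathrm{LTE},\widehat f}^2\,\mathbb E[H^{2p+2}]$ and $\|\Delta_{\mathrm{MS}}(f_\rho)\|_{\mathcal W}^2\le C_{\mathrm{LTE},f_\rho}^2\,\mathbb E[H^{2p+2}]$. Collecting terms, dividing by $c_{\rm obs}^2$, and setting $C_{\rm fit}=3\|\mathsf Y_\alpha\|_{\mathcal L}^2 C_{\rm flow}$ and $C_{\rm bias}=3(C_{\mathrm{LTE},\widehat f}^2+C_{\mathrm{LTE},f_\rho}^2)$ produces~\eqref{eq:vf-pac-bound}, with the probability $1-\eta$ inherited entirely from the flow bound.

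The main obstacle is keeping $C_{\rm bias}$ a genuine deterministic constant despite its dependence on the random estimator $\widehat f$ through $C_{\mathrm{LTE},\widehat f}$. Lemma~\ref{lem:uniform-LTE-VM} requires $\widehat f\in\mathcal C^{p+1}$ with $\sup_t\|x^{(p+1)}(t)\|$ governing the constant, and this high-order regularity must hold uniformly over the realized estimator, not just for the fixed targets. I would close this gap by taking the PHS kernel $k_{\rm phs}$ in $\mathcal C^{p+1}$ so that $\mathcal H_{\mathcal V}$ embeds continuously into $\mathcal C^{p+1}(\mathcal X)$, and by controlling $\|\widehat f\|_{\mathcal H_{\mathcal V}}$ with high probability (on the same event as the flow bound), which bounds the $\mathcal C^{p+1}$ norm, and hence $C_{\mathrm{LTE},\widehat f}$, uniformly on the compact forward-invariant set $\mathcal X$. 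A secondary technical point is the boundedness of $\mathsf A$: the backward-flow composition operators must map $\mathcal W^\circ\to\mathcal W$ boundedly, which follows from quasi-invariance of the design law under the flow with bounded Radon--Nikodym factors on $\mathcal X$.
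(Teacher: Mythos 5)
Your proof follows essentially the same route as the paper's: observability (Lemma~\ref{lem:obs}) to move into $\mathcal W$, the three-term decomposition of $\mathsf B(\widehat f-f_\rho)$ via $\mathsf Y_\alpha$ and the two vLMM residuals, Theorem~\ref{th:flow-FS-HP-bound} for the flow-mismatch term, and Lemma~\ref{lem:uniform-LTE-VM} for the bias terms, followed by absorbing constants. You are in fact more careful than the paper on two points it glosses over: you include the factor $3$ from $(a+b+c)^2\le 3(a^2+b^2+c^2)$, which the paper's step~\eqref{eq:th:vec_field_pac:pf:tri_ineq} silently omits, and you correctly flag that $C_{{\rm LTE},\widehat f}$ depends on the random estimator $\widehat f$ and therefore needs uniform control (e.g.\ via a $\mathcal C^{p+1}$ kernel embedding and a high-probability bound on $\|\widehat f\|_{\mathcal H_{\mathcal V}}$) before it can be treated as a deterministic constant.
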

\begin{proof}
    Let $\widehat f \coloneqq f_{z,y}^{\lambda_\ell}$ be the vector‑field estimator.
    Then:
    \begin{align}
        \| \widehat f - f_\rho\|_{\mathcal{V}}^{2}
        &\le \frac{1}{c_{\rm obs}^{2}(h)}\ \|\mathsf{B}(\widehat f - f_\rho)\|_{\mathcal{W}}^{2} \label{eq:th:vec_field_pac:pf:B-obs}\\
        &\le \frac{1}{c_{\rm obs}^2(h)}(
        \|\underbrace{\mathsf{B}\widehat{f}-\mathsf{Y}_\alpha \widehat{\Phi}}_{\Delta(\widehat{f})}\|_{\mathcal W}^{2}\ +
        \|\mathsf{Y}_\alpha(\widehat{\Phi}-\Phi_\rho)\|_{\mathcal W}^{2}\nonumber\\
        &\qquad +\|\underbrace{\mathsf{Y}_\alpha \Phi_\rho -\mathsf{B}f_\rho}_{\Delta(f_\rho)}\|_{\mathcal W}^{2})\label{eq:th:vec_field_pac:pf:tri_ineq}\\
        &\le \frac{1}{c_{\rm obs}^{2}(h)}\ \|\mathsf{J} + \mathsf{A}\|^2_{\mathcal{L}(\mathcal{W})}
        \big(\log\frac{4}{\eta}C_{\rm flow}\ell^{-\frac{2r}{2r+\beta}}\big)\nonumber\\
        &\qquad + \frac{1}{c_{\rm obs}^{2}(h)}\big(2C^2_{{\rm LTE}} \big)\mathbb{E}[H^{2p+2}],\label{eq:th:vec_field_pac:pf:LTE_flow-fshp-bound}
    \end{align}
    with $\mathcal{L}(\mathcal W)$ the space of bounded linear operators $\mathcal{W}\to \mathcal{W}$.

    First, inequality \eqref{eq:th:vec_field_pac:pf:B-obs} follows immediately from the Lemma~\ref{lem:obs}. Indeed, by definition,
    $c_{\rm obs}^2(h)=\lambda_{\min}(\mathsf B^*\mathsf B)$, so for any $v\in\mathcal V$,
    $c_{\rm obs}^2(h)\|v\|_{\mathcal V}^2\le\|\mathsf B v\|_{\mathcal W}^2$; applying this with
    $v=\widehat f-f_\rho$ yields:
    $$
    \|\widehat f-f_\rho\|_{\mathcal V}^2
     \le  \frac{1}{c_{\rm obs}^2(h)} \|\mathsf B(\widehat f-f_\rho)\|_{\mathcal W}^2,
    $$
    which is exactly \eqref{eq:th:vec_field_pac:pf:B-obs}.
    
    Next, to obtain \eqref{eq:th:vec_field_pac:pf:tri_ineq}, we add and subtract
    $\mathsf Y_\alpha\widehat\Phi$ and $\mathsf Y_\alpha\Phi_\rho$ inside the norm, and then apply the triangle inequality in $\mathcal W$ together with the linearity of the label map
    $\mathsf Y_\alpha=\mathsf J+\mathsf A$ as defined in \eqref{eq:label-map_vlmm-residual}. 
    Using the identity:
    $$
    \mathsf B(\widehat f-f_\rho)
    =\underbrace{\mathsf B\widehat f-\mathsf Y_\alpha\widehat\Phi}_{\Delta(\widehat f)}
    +\underbrace{\mathsf Y_\alpha(\widehat\Phi-\Phi_\rho)}_{\text{flow mismatch}}
    +\underbrace{\mathsf Y_\alpha\Phi_\rho-\mathsf B f_\rho}_{\Delta(f_\rho)},
    $$
    we directly arrive at the sum of the three squared $\mathcal W$-norms displayed in
    \eqref{eq:th:vec_field_pac:pf:tri_ineq}.
    
    For the first term on the right-hand side of
    \eqref{eq:th:vec_field_pac:pf:LTE_flow-fshp-bound}, i.e., the contribution of the flow mismatch,
    we use that $\mathsf Y_\alpha=\mathsf J+\mathsf A$ is a bounded linear operator on $\mathcal W$.
    In fact, $\mathsf J$ is an isometry and $\mathsf A$ is bounded by Assumption~\ref{as:bound_step_ratio}.
    Hence, by sub-multiplicativity of the operator norm on $\mathcal W$,
    \begin{align*}
    \|\mathsf Y_\alpha(\widehat\Phi-\Phi_\rho)\|_{\mathcal W}^2
    &=\|(\mathsf J+\mathsf A)(\widehat\Phi-\Phi_\rho)\|_{\mathcal W}^2\\
     &\le  \|\mathsf J+\mathsf A\|_{\mathcal L(\mathcal W)}^2 \|\widehat\Phi-\Phi_\rho\|_{\mathcal W}^2.
    \end{align*}
    Invoking the flow FS-HP bound in Lemma~\ref{lm:flow-FS-HP-bound} for
    $\|\widehat\Phi-\Phi_\rho\|_{\mathcal W}^2$ then yields the first term appearing in
    \eqref{eq:th:vec_field_pac:pf:LTE_flow-fshp-bound}.
    
    For the second term on the right-hand side of
    \eqref{eq:th:vec_field_pac:pf:LTE_flow-fshp-bound}, i.e., the contribution of the two vLMM residuals,
    we apply Lemma~\ref{lem:uniform-LTE-VM} to the first and third terms of
    \eqref{eq:th:vec_field_pac:pf:tri_ineq}. 
    This gives
    $\|\Delta(\widehat f)\|_{\mathcal W}^2 \le C_{{\rm LTE}}^2 \mathbb E[H^{2p+2}],$
    which produce the second term in \eqref{eq:th:vec_field_pac:pf:LTE_flow-fshp-bound}.
    
    Finally, note that $\|\mathsf J+\mathsf A\|_{\mathcal L(\mathcal W)}$ is bounded under
    Assumption~\ref{as:bound_step_ratio}. Therefore, we can absorb constants by setting
    $C_{\rm fit}\ge C_{\rm flow} \|\mathsf J+\mathsf A\|_{\mathcal L(\mathcal W)}^2$ and
    $C_{\rm bias}\coloneqq 2C_{{\rm LTE}}^2$. Substituting these into
    \eqref{eq:th:vec_field_pac:pf:LTE_flow-fshp-bound} completes the proof of \eqref{eq:vf-pac-bound}.
\end{proof}

An insight from \eqref{eq:vf-pac-bound} is that the excess risk is composed of: 
(i) the statistical fitting error term $\log(4/\eta) \ell^{-{2r}/{(2r+\beta)}}$ that scales down as the number of sampling data increases; and
(ii) the discretization error term $\mathbb{E}[H^{2p+2}]$ that scales down as the average step size goes to zero.
The discretization error is independent of $\ell$, therefore, Theorem~\ref{th:vec_field_pac} predicts that we need to control both $\ell$ and $H$ to reduce the excess risk of $\widehat f$, and consequently $\widehat{\Phi}$.
Applications of Theorem~\ref{th:vec_field_pac} includes~\cite {leung2025learning}, and situations when the system identification algorithm explicitly depends on a linear discretization step.

\section{Numerical Simulations}\label{sec:numeric}
We consider a controlled mass--spring system
\begin{equation}
    \dot{x}(t)=
    \begin{bmatrix}
    0 & 1\\ -k/m & -c/m
    \end{bmatrix} x(t)+\begin{bmatrix}
    0\\ 1/m
    \end{bmatrix} u(t),
\label{eq:plant}
\end{equation}
with mass $m=2 \  \mathrm{kg}$, no viscous damping ($c=0$), and spring constant $k = m\omega_n^2$, where $\omega_n = 2\pi\cdot 0.4~\mathrm{rad/s}$. 
The initial condition is $x(0)=(0,0)^\top$ and $u(t)$ is a pseudo-random binary signal. 
Trajectories are integrated in $t \in [0,10]$ and evaluated with a uniform period $h = 0.004~\mathrm{s}$.
We evaluate identification performance both at a fixed sampling period and across a range of values.

\subsection{Benchmarking Across Source Conditions}

\begin{figure}[t]
    \centering
    \begin{subfigure}{0.49\linewidth}
        \includegraphics[width=\linewidth]{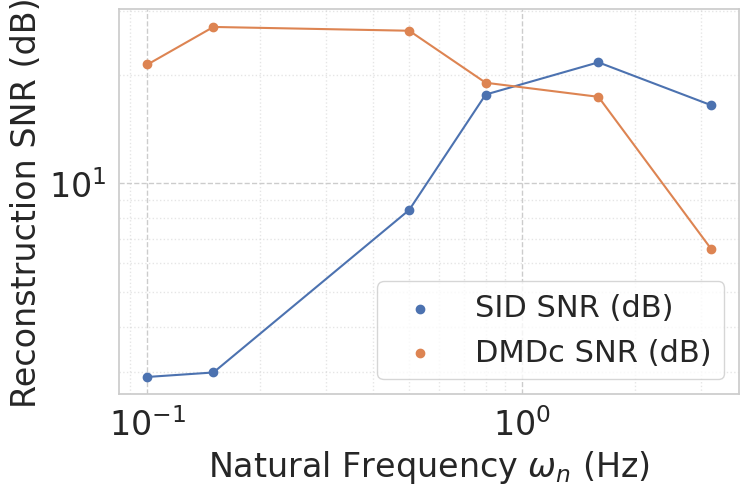}
    \end{subfigure}
    \begin{subfigure}{0.49\linewidth}
        \includegraphics[width=\linewidth]{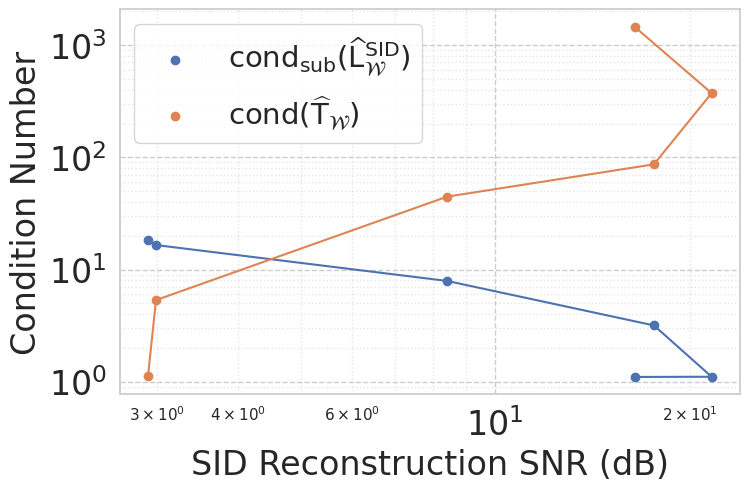}
    \end{subfigure}
    \caption{Reconstruction Fidelity: (Left) Compare reconstruction SNR of SID with DMDc with increasing natural frequency in \eqref{eq:plant}; (Right) Reconstruction SNR of SID against condition number of $\widehat{\mathsf{L}}_{\mathcal W}^{\rm SID}$ and $\widehat{\mathsf{T}}_{\mathcal W}$.}
    \label{fig:recons_snr_relation}
\end{figure}

The simulation in Fig.~\ref{fig:recons_snr_relation} shows that differing source conditions make algorithms excel on different systems.
Furthermore, benchmarking with a single-signal setup can obscure ill-conditioned regimes that basic diagnostics miss.
For a mass-spring system with fixed sampling rate and observation horizon, the dynamic mode decomposition with control (DMDc) reconstruction signal-to-noise ratio (SNR) decreases as the natural frequency decreases.
The varying reconstruction SNR of DMDc aligns with the decreasing condition number and increasing $\lambda_{\min}$ of the empirical covariance matrix $\widehat{\mathsf{T}}_{\mathcal W}$. 
By contrast, subspace identification (SID) exhibits the opposite behavior. 
The reason is that SID performs an orthogonal projection between the past and future block-Hankel matrices before the regression step, which effectively changes its source condition. 
If we account for this different source condition by computing a two-eigenvalue ``sub-condition number'' from the SID, specific integral operator $\widehat{\mathsf L}^{\mathrm{SID}}_{\mathcal W}$, then the expected downward trend is recovered: reconstruction SNR declines as the sub-condition number of $\widehat{\mathsf L}^{\mathrm{SID}}_{\mathcal W}$ increases in Fig.~\ref{fig:recons_snr_relation} (Right).


\subsection{The Fast Sampling Paradox}

\begin{figure}[t]
    \centering
    \begin{subfigure}{0.49\linewidth}
        \includegraphics[width=\linewidth]{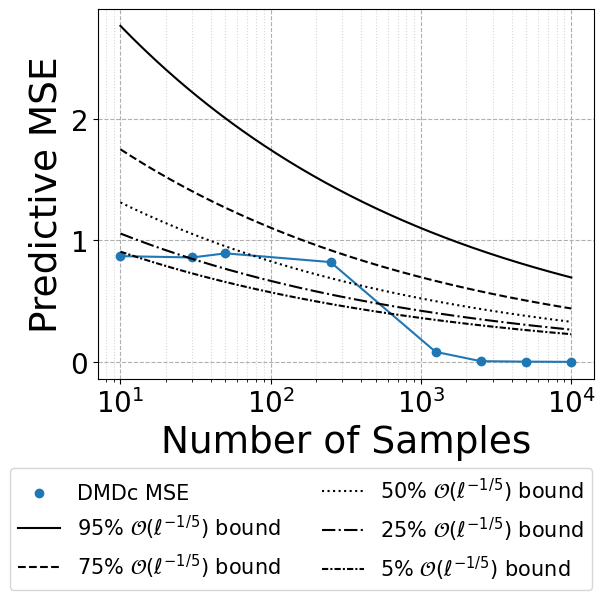}
    \end{subfigure}
    \begin{subfigure}{0.49\linewidth}
        \includegraphics[width=\linewidth]{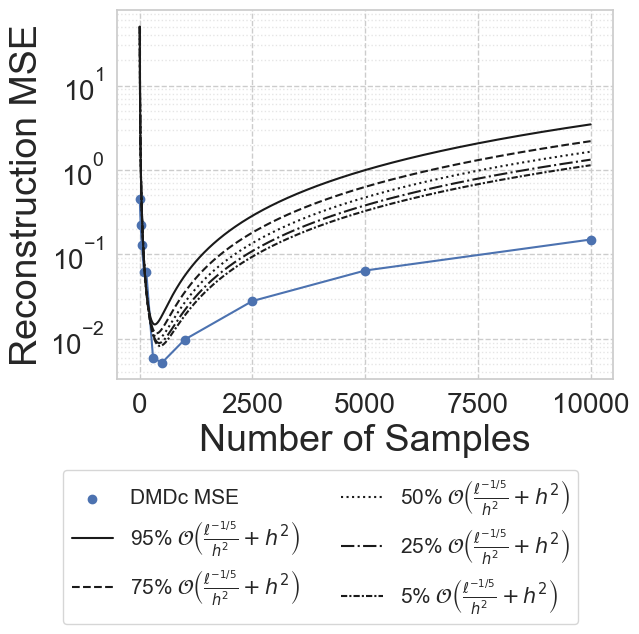}
    \end{subfigure}
    \caption{Fitting Performance: (Left) predictive mean square error (MSE) of DMDc prediction with increasing effective samples $\ell_{\rm eff}$; (Right) reconstruction MSE of DMDc with respect to increasing sample size drawn from the same horizon.}
    \label{fig:pred_perf}
\end{figure}

We now illustrate excess error bounds provided in Lemma~\ref{lm:flow-FS-HP-bound} and Theorem~\ref{th:vec_field_pac}.
In Fig.~\ref{fig:pred_perf} (Left), samples are collected in sampling horizons $(0.04, \dots, 40)$ with uniform step size $0.004~\mathrm{s}$, then the fitted DMDc MSE are evaluated on the time grid $(0, 0.004, \dots, 40)$.
Lemma~\ref{lm:flow-FS-HP-bound} predicts a monotonic decrease in excess risk as sample size increases, assuming step laws are held strictly fixed and discretization effects do not contribute to the resulting error. 
By contrast, Theorem~\ref{th:vec_field_pac} predicts a non-monotonic rebound when the identified object is a vector field and the discretization bias is amplified as step sizes diminish. 
In Fig.~\ref{fig:pred_perf} (Right), as the number of samples over a fixed observation horizon increases, the DMDc reconstruction MSE initially decreases. 
Beyond about 500 samples, however, the MSE rebounds, consistent with Theorem~\ref{th:vec_field_pac}.
With moderate noise, per-step SNR drops and fast-sampling artifacts emerge, biasing reconstruction unless explicitly handled by the method.

\section{Conclusion}
We proposed \emph{spectral flow learning}, which learns in a windowed flow space via a lag-linear label operator and spectral regularization, yielding FS-HP guarantees that lift to the vector field through a multistep observability inequality and a uniform local-truncation bound. 
The resulting two-term bound separates a statistical rate \(\mathcal{O}\left(\ell^{-\frac{2r}{2r+\beta}}\right)\) from a deterministic vLMM bias \(\mathbb{E}[H^{2p+2}]\).
The analysis spans AB/AM/BDF families and is compatible with physics-informed vvRKHS designs. 
SFL offers a common language for practical identification phenomena and motivates future work on principled choices of sampling rates and step schedules.

\bibliographystyle{IEEEtran}
\bibliography{refs}

\end{document}